\documentclass[12pt,draftclsnofoot,onecolumn]{IEEEtran}
\usepackage{amsmath,amsthm, amssymb,graphicx,subcaption,cite}
\usepackage{braket}
\usepackage{booktabs}
\usepackage{algorithmic}
\usepackage{enumitem}
\usepackage{balance}
\usepackage{tikz}
\usetikzlibrary{positioning}
\newtheorem{remark}{Remark}

\newtheorem{theorem}{Theorem}

\usepackage{needspace}
\Needspace{5\baselineskip} % before \section to avoid awkward breaks

\usepackage{titlesec}

\titlespacing*{\section}
  {0pt}{0.8ex}{0.5ex}

\titlespacing*{\subsection}
  {0pt}{0.6ex}{0.4ex}

\titleformat{\paragraph}[runin]
  {\normalfont\normalsize\bfseries}{\theparagraph}{1em}{}[:]
\titlespacing*{\paragraph}
  {0pt}{0.4ex}{0.7em}

\makeatletter

\title{Fidelity-Age–Aware Scheduling in Quantum Repeater Networks}
\author{\IEEEauthorblockN{Ozgur Ercetin, Zafer Gedik\\}
\IEEEauthorblockA{\textit{Faculty of Engineering and Natural Sciences,} \\
\textit{Sabanci University},
Istanbul, TR. }
}

\begin{document}

\maketitle
% Alternative manual adjustment
%\vspace{-1.5\baselineskip}

\begin{abstract}
Quantum repeater networks distribute entanglement over long distances but must balance fidelity, delay, and resource contention. 
Prior work optimized throughput and end-to-end fidelity, yet little attention has been paid to the \emph{freshness} of entanglement—the time since a usable Bell pair was last delivered. 
We introduce the \emph{Fidelity-Age (FA)} metric, which measures this interval for states whose fidelity exceeds a threshold $F_{\min}$. 
A renewal formulation links slot-level success probability to long-run average FA, enabling a stochastic control problem that minimizes FA under budget and memory limits. 
Two lightweight schedulers, \emph{FA-THR} and \emph{FA-INDEX}, approximate Lyapunov-drift–optimal control. 
Simulations on slotted repeater grids show that FA-aware scheduling preserves throughput while reducing extreme-age events by up to two orders of magnitude. 
Fidelity-Age thus provides a tractable, physically grounded metric for reliable and timely entanglement delivery in quantum networks.
\end{abstract}

\section{Introduction}

Quantum repeater networks extend point-to-point quantum key distribution (QKD) links toward multi-hop entanglement distribution across large-scale topologies \cite{kimble2008quantum,wehner2018quantum}. 
Their applications—ranging from distributed quantum computation to high-rate secret key generation—depend on the timely delivery of entangled Bell pairs whose fidelities exceed a task-specific threshold.

Unlike classical packets, entangled states are transient physical resources. 
Their fidelity decays with storage time due to decoherence, and both entanglement swapping and purification introduce random delays and success probabilities \cite{briegel1998quantum,dur1999quantum}. 
As a result, the usefulness of an entanglement link depends not only on how often it is established but also on \emph{when} it was last refreshed. 
Delivering a pair that has decohered below a minimum fidelity $F_{\min}$ is equivalent to a delivery failure.

Existing work optimizes throughput and end-to-end fidelity in repeater networks \cite{pant2019routing,nickerson2014repeater,vinay2019resource,shchukin2019quantum,van2021path}, yet there is little theory on quantifying or controlling the \emph{freshness} of distributed entanglement. 
In classical communication systems, the \emph{Age of Information} (AoI) formalism provides a stochastic measure of update timeliness and has yielded provably optimal scheduling rules \cite{kaul2012real,kadota2019aoi}. 
An analogous metric for quantum networks has been missing.

This paper introduces the \emph{Fidelity-Age (FA)} metric, which measures the time elapsed since the most recent delivery of an entangled pair with fidelity at least $F_{\min}$. 
FA couples temporal freshness with physical-layer quality and directly reflects decoherence, probabilistic link generation, purification outcomes, and swapping success.

The main contributions are:
\begin{itemize}
  \item We define the FA process and derive its renewal identity, establishing a direct relation between slot-level usable-delivery probability and long-run expected FA.
  \item We formulate a stochastic control problem minimizing average FA under attempt-budget and memory constraints.
  \item We propose practical FA-aware scheduling policies—\emph{FA-THR} and \emph{FA-INDEX}—motivated by Lyapunov-drift minimization, and evaluate their performance under finite-coherence and heterogeneous-path conditions.
\end{itemize}

By linking physical fidelity evolution to a renewal-based freshness metric, this work provides a quantitative foundation for age-aware control in quantum repeater networks. 
It shows that explicit fidelity-age weighting yields schedules that are simultaneously efficient, fair, and stable under realistic decoherence limits.

\section{Background and Related Work}

% Quantum repeater networks enable the distribution of entanglement across distant nodes, forming the foundation for applications such as quantum key distribution (QKD), distributed quantum computing, and secure quantum communication. Unlike classical packets that can be duplicated and retransmitted, entangled quantum states are fragile physical resources that cannot be cloned or amplified. Their fidelity---a measure of how closely a generated state resembles an ideal Bell pair---decays over time due to decoherence in quantum memories and imperfections in physical operations. Consequently, the timely generation and use of high-fidelity entanglement are critical for reliable quantum networking.

\subsection{Qubits, Bell States, and Fidelity}

A qubit is a unit vector in $\mathbb{C}^2$, $|\psi\rangle = \alpha|0\rangle + \beta|1\rangle$ with $|\alpha|^2+|\beta|^2=1$. 
Two-qubit Bell states serve as maximally entangled targets; for example $|\Phi^+\rangle=(|00\rangle+|11\rangle)/\sqrt{2}$. 

Quantum memories store entangled states temporarily until successful swapping or purification can be performed. However, stored states decohere exponentially with time, reducing their fidelity.
For a produced two-qubit state with density matrix $\rho$, the (state) fidelity with respect to a target pure state $|\psi_{\mathrm{tar}}\rangle$ is
\begin{equation}
F \;=\; \langle \psi_{\mathrm{tar}}|\, \rho \,|\psi_{\mathrm{tar}}\rangle.
\end{equation}
We adopt the common Werner approximation for noisy pairs: write $p_W=(4F-1)/3\in[0,1]$. 
Storage for one slot maps $p_W \mapsto p_W e^{-\Delta/T_2}$, equivalently
\begin{equation}
F(\tau+1)=1-\bigl(1-F(\tau)\bigr)e^{-\Delta/T_2},
\label{eq:decoherence}
\end{equation}
and end-to-end Werner composition across a $k$-edge path yields $p_{W,\mathrm{end}}=\prod_{j=1}^k p_{W,e_j}$ and
\begin{equation}
F_{\mathrm{end}}=\frac{1+3p_{W,\mathrm{end}}}{4}.
\label{eq:swap-fidelity}
\end{equation}
When $F(\tau)$ ($F_{\mathrm{end}}$) drops below a task-specific minimum threshold $F_{\min}$, the state becomes unusable and must be discarded.

\subsection{Freshness and the Fidelity-Age Metric}

In classical networking, the \emph{Age of Information} (AoI) framework quantifies the timeliness of received updates by measuring the time elapsed since the most recent successful delivery. Inspired by this concept, we define the \emph{Fidelity-Age} (FA) metric as the time since the last delivery of an entangled pair whose fidelity exceeds $F_{\min}$. The FA process thus couples temporal freshness with physical-layer quality, reflecting both stochastic link success and fidelity evolution due to decoherence. Minimizing long-run average FA corresponds to maintaining high-quality entanglement with minimal delay between usable deliveries.

\subsection{Related Work}
Prior work on quantum repeater networks can be grouped into three themes---(i) multi-hop architectures and routing, (ii) fidelity management via purification and memory constraints, and (iii) network-level scheduling and resource allocation. We position the proposed \emph{Fidelity-Age (FA)} metric at the intersection of (ii)--(iii) by providing a freshness-oriented objective that explicitly couples stochastic delivery, fidelity thresholds, and decoherence.

%\paragraph{Architectures and routing with fidelity constraints.}
Early work established routing and multi-hop entanglement distribution models under realistic repeater operations \cite{pant2019routing}. More recent surveys clarify the physical trade-offs that link achievable rate, distance, and fidelity \cite{azuma2023review}. Building on these foundations, router-style multiplexing and end-to-end path optimization with explicit fidelity targets and memory awareness have been formulated via ILPs and scalable heuristics \cite{lee2022router,halder2024optimal}. Related protocol lines, including distance-independent-rate and packet-switching approaches, further motivate control at the network layer beyond single repeater chains \cite{patil2022distanceindependent,diadamo2022packet}.

%\paragraph{Fidelity management: purification placement, depth, and finite coherence.}
A key determinant of end-to-end usability is purification: both \emph{where} purification is performed and \emph{how many} rounds are scheduled. Recent results show that network-level scheduling of purification can substantially outperform fixed-depth designs \cite{xiao2024psc}. Separately, finite-memory and finite-coherence studies quantify rate losses due to limited storage time and highlight the need for policies that account for decoherence-induced fidelity decay \cite{semenenko2022finiteT2}. Work on reusing pre-existing entanglements likewise emphasizes that temporal state (what is already stored, and how old it is) can change path-selection decisions and reduce latency \cite{fayyaz2025pathsel}.

%\paragraph{Scheduling and resource allocation.}
At the control layer, scheduling decisions are computationally difficult (often NP-hard) and have been studied using exact formulations, heuristics, and reinforcement learning, with objectives such as throughput, fairness across requests, or utility maximization \cite{gu2023esdi,bhambay2025qswitch,ni2025dqn}. These efforts typically assume a notion of success/utility per slot, but they do not provide a unified \emph{freshness} metric that directly measures the cost of delayed usable deliveries under fidelity decay.

%\paragraph{Freshness metrics and our contribution.}
While quantum networking lacks a standardized ``age'' metric for entanglement, the classical Age-of-Information (AoI) literature offers a principled framework for freshness-aware optimization, including Lyapunov-drift methods \cite{kadota2019aoi}. We adapt this viewpoint to repeater networks by defining \emph{Fidelity-Age (FA)}: successful deliveries with end-to-end fidelity at least $F_{\min}$ serve as renewal events, and the time since the most recent such delivery quantifies freshness-conditioned usability. Relative to routing and path-optimization work \cite{pant2019routing,lee2022router,halder2024optimal}, purification scheduling \cite{xiao2024psc}, and general scheduling approaches \cite{gu2023esdi,bhambay2025qswitch,ni2025dqn,fayyaz2025pathsel}, our focus is an objective and analysis framework that integrates (a) stochastic entanglement generation and swapping, (b) fidelity thresholds, and (c) decoherence over finite coherence times, yielding scheduling rules explicitly targeted at freshness-conditioned usability.

\section{System Model}
\label{sec:SystemModel}
We formalize a slotted quantum-repeater network on an explicit probability space and specify storage, decoherence, purification, and swapping so that the model is consistent with the numerical experiments.

\subsection{Network Architecture}
\label{sec:arch}

A software-defined network (SDN) controller maintains a global view of link states and stored pairs, issues per-slot commands for link attempts, swapping, and optional purification, and uses the classical control plane to coordinate quantum operations.
Each intermediate repeater holds two quantum memories to serve its two incident links in a line topology; end nodes hold one memory per active end-to-end demand. When entanglement diversity (parallel path attempts or later distillation) is enabled, end nodes allocate an extra memory unit per additional concurrent end-to-end (e2e) attempt.
A \emph{path} is a simple sequence of physical links connecting a source--destination pair. The controller may reuse \emph{prior entanglements} left from previous cycles on any subset of path edges. Reuse can tilt decisions toward longer paths that already hold stored pairs.
Unless stated otherwise, hardware is uniform: equal inter-nodal distances, identical sources and BSM modules.

\subsection{Probabilistic Slotted Graph Model}
Let $G=(V,E)$ be a connected undirected graph of repeaters and lossy optical links. Time is slotted with fixed slot duration $\Delta>0$; slots are indexed by $t\in\mathbb{N}$. The system is defined on $(\Omega,\mathcal{F},\mathbb{P})$.

Each edge $e\in E$ has $S(e)\in\mathbb{N}$ parallel modes. In slot $t$, mode $i\in\{1,\dots,S(e)\}$ attempts a link-level entanglement with success indicator
\begin{align*}
X_t^{(i)}(e)&\sim \mathrm{Bernoulli}\big(p_0(e)\big),\\
p_0(e)&=\eta(e),\qquad \eta(e)=\exp(-\alpha L(e)),
\end{align*}
where $L(e)$ is the link length and $\alpha>0$ is the fiber loss coefficient (km$^{-1}$)\cite{Nielsen_Chuang_2010}.

\begin{remark}[Independence]
\label{rem:indep}
The random variables $\{X_t^{(i)}(e)\}$ are independent across $e$, $i$, and $t$, and independent of all Bell-state measurement (BSM) and purification outcomes.
\end{remark}

The number of successful link-level pairs on $e$ in slot $t$ is
\[
N_t(e)=\sum_{i=1}^{S(e)}X_t^{(i)}(e)\sim\mathrm{Binomial}\!\big(S(e),p_0(e)\big).
\]
The per-edge success probability (at least one success on $e$) is
\[
p_{\text{link}}(e)=1-\big(1-p_0(e)\big)^{S(e)}.
\]

We cap the number of retained pairs per edge per slot by $m_{\max}(e)\in\mathbb{N}$:
\[
N_t^{\mathrm{ret}}(e)=\min\{N_t(e),\,m_{\max}(e)\}.
\]
Note that $m_{\max}(e)=1$ when there is no purification, whereas $m_{\max}(e)\ge 2$ when there is.

We also define the \emph{external multigraph} $\mathcal{H}_t^{\mathrm{multi}}$ whose edge multiplicity on $e$ equals $N_t^{\mathrm{ret}}(e)$, and its simple support
\[
\mathcal{H}_t=\{e\in E:\,N_t^{\mathrm{ret}}(e)\ge 1\}.
\]

\subsection{Internal phase: purification and swapping}
Internal operations are local to repeaters and occur after the external phase within the same slot.
We adopt the standard BBPSSW entanglement purification protocol with a fixed noise model (bit-flip or Werner) throughout the experiments; see \cite{Bennett1996Purif} for the explicit maps and success probabilities. In all cases, input-pair fidelities already include storage decay via~\eqref{eq:decoherence}.

We assume independent BSMs at intermediate repeaters with per-repeater success probability $q\in(0,1]$. For a path with $k$ edges, a swap attempt succeeds with probability $q^{\,k-1}$, conditional on one available pair per edge and independence across repeaters \cite{Nielsen_Chuang_2010}.

Edge pairs are modeled as Werner states. Conditioned on successful swaps, the end-to-end state remains Werner with a parameter equal to the product of edge parameters; we use this only to check usability $F_{\mathrm{end}}(\pi)\ge F_{\min}$ and omit the standard closed forms \cite{Nielsen_Chuang_2010}. For $T_{\mathrm{mem}}>1$, edge fidelities already include storage decay and any in-slot purification.

\subsection{Usable deliveries and success events}
Fix a set $\mathcal{P}$ of source--destination requests.
An application-level threshold $F_{\min}\in(1/4,1)$ defines \emph{usable} deliveries.
In slot $t$, for flow $(s,d)$ we say a usable delivery occurs if there exists a simple path $\pi\subseteq \mathcal{H}_t$ from $s$ to $d$ with $k:=|\pi|$ edges such that:
\begin{enumerate}
\item Each edge $e\in\pi$ has at least one pair available for swapping after optional purification (and storage decay when $T_{\mathrm{mem}}>1$);
\item All $k-1$ BSMs on $\pi$ succeed (probability $q^{\,k-1}$);
\item The resulting end-to-end fidelity satisfies $F_{\text{end}}(\pi)\ge F_{\min}$.
\end{enumerate}
Let $Y_t^{(s,d)}\in\{0,1\}$ be the indicator of this event.
If multiple edge-disjoint paths are activated in the same slot, we attempt them independently; independence across edge-disjoint paths follows from Remark~\ref{rem:indep} and disjoint resource usage.
The per-slot throughput is $N_{\mathrm{succ}}(t)=\sum_{(s,d)\in\mathcal{P}}Y_t^{(s,d)}$.

Although path feasibility is described in terms of the simple support graph $\mathcal H_t$, simultaneous path activations are constrained by the retained-pair multiplicities $N_t^{\mathrm{ret}}(e)$. Each stored entangled pair can be consumed by at most one swapping operation in a slot. Consequently, if two candidate paths (possibly serving different source--destination pairs) share an edge $e$, they may be activated in the same slot only if $N_t^{\mathrm{ret}}(e)\ge 2$; when $N_t^{\mathrm{ret}}(e)=1$, at most one such path can be scheduled. This exclusivity is enforced by the controller action $a(t)$.

\subsection{Controller Actions and Constraints}
\label{subsec:controller}

At each slot $t$, the controller observes the realized external successes $\{N_t^{\mathrm{ret}}(e)\}$ defining $\mathcal{H}_t^{\mathrm{multi}}$ and the fidelity--age states of all stored pairs.
It then selects an action
\begin{align*}
a(t)\in\mathcal A_t
=\{\text{choose up to }R\text{ flow--path activations and optional purifications}\},
\end{align*}
subject to the following constraints:
\begin{itemize}[leftmargin=*]
\item \textbf{Attempt budget:} at most $R$ activations per slot;
\item \textbf{Local exclusivity:} each stored pair is used at most once; purification consumes two pairs on the same link;
\item \textbf{Memory:} a node stores at most $m_{\max}(e)$ pairs per edge, each expiring after $T_{\mathrm{mem}}$ slots under the decoherence law~\eqref{eq:decoherence}.
\end{itemize}
No cross-slot coordination or global optimization is assumed; the resulting dynamics are Markovian.

\subsection{Homogeneous and heterogeneous regimes}
Let a simple path $\pi=(e_1,\dots,e_k)$ comprise $k$ edges.
Under homogeneous assumptions
\[
S(e)\equiv S,\quad p_0(e)\equiv p_0,\quad q\ \text{constant},\quad F_0(e)\equiv F_0,
\]
each edge succeeds independently with probability
\[
p_{\text{link}} = 1 - (1-p_0)^S.
\]
Hence, by independence,
\[
\Pr\{\pi \subseteq \mathcal H_t\}
= \prod_{j=1}^{k}\Pr\{e_j\in\mathcal H_t\}
= (p_{\text{link}})^{k}.
\]
Conditional on this event, the $k-1$ intermediate BSMs succeed independently with probability $q^{\,k-1}$.
Therefore, the unconditional probability that path $\pi$ yields one end-to-end ebit in slot~$t$ is
\[
P_{\mathrm{e2e}}(\pi)=(p_{\text{link}})^{k} q^{\,k-1}.
\]
Numerical results also consider \emph{heterogeneous} links with $L(e)$ drawn from an interval, leading to edge-dependent $p_0(e)$ and baseline fidelities $F_0(e)$ (via hardware- or distance-dependent calibration).
In all heterogeneous experiments we keep~\eqref{eq:swap-fidelity} for fidelity composition and~\eqref{eq:decoherence} for storage decay.

\begin{remark}[Stationarity]
\label{rem:stationary}
Under Remark~\ref{rem:indep} and fixed parameters, the external multigraph process $\{\mathcal{H}_t^{\mathrm{multi}}\}$ is i.i.d. across slots; with $T_{\mathrm{mem}}>1$ the joint state (stored-pair inventories with ages) is a time-homogeneous Markov chain. These properties are the basis for renewal and steady-state analyses used later.
\end{remark}

\section{Problem Statement}
\label{sec:ProblemFormulation}
This section formalizes the Fidelity-Age (FA) control problem by defining the usable-delivery process, the resulting FA dynamics, and the optimization objective under the system constraints introduced in Sec.~\ref{sec:SystemModel}.

\subsection{Usable-delivery process}
\label{sec:usable}
From Sec.~\ref{sec:SystemModel}, in each slot $t$ the controller observes the random multigraph $\mathcal H_t^{\mathrm{multi}}$ of link-level successes and executes internal operations according to its chosen action $a(t)$.
Let $Y_t^{(s,d)}\in\{0,1\}$ denote the event that request $(s,d)\in\mathcal P$ receives a \emph{usable} entangled pair in slot~$t$, i.e.,
\begin{align*}
Y_t^{(s,d)}
=\mathbf 1\!&\left\{
\exists\, \pi\subseteq\mathcal H_t \text{ selected by } a(t):\text{$\pi$ connects $(s,d)$,}\right.\\
&\left.
\text{ all swaps succeed, and } F_{\text{end}}(\pi)\ge F_{\min}
\right\}.
\end{align*}
The selection by $a(t)$ ensures that no physical entangled pair is used by more than one path in the same slot. Usable deliveries depend on the realized external successes, any applied purifications, storage decay, and the controller’s scheduling choice.

\subsection{Fidelity-Age (FA)}
For each request $(s,d)$ define the sequence of usable-delivery epochs
\[
t_1^{(s,d)} < t_2^{(s,d)} < \cdots,
\qquad
t_{n+1}^{(s,d)} = \inf\{t>t_n^{(s,d)} : Y_t^{(s,d)}=1\}.
\]
The \emph{Fidelity-Age} process measures the number of slots since the most recent usable delivery,
\begin{equation}
A_{s,d}(t) = t - \max\{\,t_n^{(s,d)} \le t\,\},
\label{eq:fa-def}
\end{equation}
with $A_{s,d}(t)=t$ if no usable delivery has yet occurred.
$A_{s,d}(t)$ increases linearly between deliveries and resets to~0 whenever $Y_t^{(s,d)}=1$.

\subsection{Optimization objective}
Controller decisions obey the per-slot constraints defined in Sec.~\ref{subsec:controller}.
The performance metric of interest is the long-run time-average Fidelity-Age,
\begin{equation}
\overline A_{s,d}
=\limsup_{n\to\infty}\frac{1}{n}\sum_{t=1}^{n} A_{s,d}(t),
\end{equation}
and the system objective is to minimize a weighted sum of these averages:
\begin{equation}
\label{eq:fa-opt}
\min_{\phi\in\Phi}
\sum_{(s,d)\in\mathcal P} w_{s,d}\,\overline A_{s,d},
\quad
\text{s.t. feasibility and budget constraints,}
\end{equation}
where $\Phi$ is the set of stationary nonanticipative control policies and $\{w_{s,d}\}$ are fixed nonnegative weights.

\subsection{Slot-level success probability}
For a path $\pi=(e_1,\ldots,e_k)$ activated in slot $t$, the unconditional probability of a usable delivery is
\begin{equation}
\label{eq:path-success}
P_{\mathrm{use}}(\pi)
=\Biggl(\prod_{j=1}^{k} p_{\text{link}}(e_j)\Biggr)\,
q^{\,k-1}\,
\mathbf 1\!\left\{F_{\text{end}}(\pi)\ge F_{\min}\right\}.
\end{equation}
In the homogeneous regime, $p_{\text{link}}(e)\equiv p_{\text{link}}$ and $P_{\mathrm{use}}(\pi)=p_{\text{link}}^{k}q^{\,k-1}$ whenever $F_{\text{end}}(\pi)\ge F_{\min}$.

\subsection{Renewal interpretation}
Under Remark~\ref{rem:indep} and any stationary policy $\phi\in\Phi$, the joint process $\{Z_t\}$ of system states is a time-homogeneous Markov chain.
If for every $(s,d)$ the inter-delivery time $\tau^{(s,d)}=t_{n+1}^{(s,d)}-t_n^{(s,d)}$ has finite second moment, then standard renewal theory yields
\begin{equation}
\label{eq:renewal-identity}
\overline A_{s,d}
=\frac{\mathbb E[(\tau^{(s,d)})^2]}{2\,\mathbb E[\tau^{(s,d)}]}.
\end{equation}
This identity connects slot-level success probabilities in~\eqref{eq:path-success} to long-run FA performance and underlies the policy analysis in Sec.~\ref{sec:analysis}.

\section{Analysis}
\label{sec:analysis}

We analyze the Fidelity-Age (FA) process under a stationary control policy $\phi$ using renewal theory on the slotted quantum-network model.

\subsection{Renewal formulation}
Let $Y_t$ denote the single-flow usable-slot indicator from Sec.~\ref{sec:usable} (we drop $(s,d)$ for brevity).
Let the inter-delivery interval be
\begin{equation}
\tau=\inf\{\ell\ge 1:\,Y_{t+\ell}=1\mid Y_t=1\},
\label{eq:tau}
\end{equation}
measured in slots of duration $\Delta$. The renewal epochs $\{t_k\}$ correspond to times when a usable Bell pair is delivered.

\begin{theorem}[FA renewal identity]
\label{thm:renewal}
Fix a slot duration $\Delta>0$ and a stationary policy $\phi$.
Let $Y_t\in\{0,1\}$ be the usable-delivery indicator from Sec.~\ref{sec:usable} and let
$\tau=\inf\{\ell\ge 1:\,Y_{t+\ell}=1\mid Y_t=1\}$ be the inter-delivery time (in slots).
Assume the underlying state process
\[
Z_t=(\mathcal H_t^{\mathrm{multi}},\,\text{stored-pair inventories and ages},\,a(t))
\]
is stationary and ergodic under $\phi$, that $0<\Pr\{Y_t=1\}<1$, and that
$\mathbb{E}[\tau]<\infty$ and $\mathbb{E}[\tau^2]<\infty$.
Then the long-run average Fidelity--Age for flow $(s,d)$ exists and
\begin{equation}
\bar A_{s,d}
=\frac{\mathbb{E}[(\Delta \tau)^2]}{2\,\mathbb{E}[\Delta \tau]}
=\Delta\,\frac{\mathbb{E}[\tau^2]}{2\,\mathbb{E}[\tau]}\,.
\label{eq:FA-renewal}
\end{equation}
\end{theorem}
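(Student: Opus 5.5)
The plan is a renewal--reward argument on the delivery epochs $t_1<t_2<\cdots$. The first step is to cut the sample path of $A(t)$ into cycles $[t_n,t_{n+1})$ of length $\tau_n=t_{n+1}-t_n$. By the definition \eqref{eq:fa-def}, inside the $n$th cycle the age takes the values $0,1,\dots,\tau_n-1$ in slot units, so the accumulated age in that cycle is
\[
R_n=\sum_{t=t_n}^{t_{n+1}-1}A(t)=\Delta\sum_{j=0}^{\tau_n-1}j=\Delta\frac{\tau_n(\tau_n-1)}{2}.
\]
If age is instead integrated continuously between slot boundaries (the sawtooth picture), the cycle area is exactly $\Delta^2\tau_n^2/2$. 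The identity \eqref{eq:FA-renewal} is stated in this continuous-sawtooth convention, i.e.\ age measured as $\Delta\tau$ and averaged per unit time. I will adopt that convention explicitly and note that the discrete-sum version differs only by the additive constant $-\Delta/2$.

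The second step is to justify the limits. Write
\[
\frac1n\sum_{t\le n}A(t)=\frac{\sum_{k\le K(n)}R_k+\text{(boundary terms)}}{n},
\]
where $K(n)$ is the number of renewals by slot $n$. The initial segment before $t_1$ contributes $O(t_1^2)$, which is finite a.s.\ because $\Pr\{Y_t=1\}>0$ and the process is ergodic. The incomplete last cycle is at most $R_{K(n)+1}$, and $R_{K(n)+1}/n\to0$ follows from $\mathbb{E}[\tau^2]<\infty$ via a Borel--Cantelli argument on $\tau_n^2/n$.

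The main obstacle is that the $\tau_n$ need not be i.i.d.: with $T_{\mathrm{mem}}>1$ the stored-pair inventory at a delivery epoch can vary, so the delivery epochs are not regeneration points of $Z_t$. Here I would use stationarity and ergodicity of $Z_t$ instead of classical renewal theory. Pass to the Palm version of $Z$ at delivery epochs. The sequence $(\tau_n)$ is then stationary and ergodic under the Palm measure, and $\tau$ in \eqref{eq:tau} has exactly the Palm law.

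Birkhoff's ergodic theorem then gives $\frac1K\sum_{k\le K}\tau_k\to\mathbb{E}[\tau]$ and $\frac1K\sum_{k\le K}\tau_k^2\to\mathbb{E}[\tau^2]$ almost surely, both finite by hypothesis. Since $t_K/K\to\mathbb{E}[\tau]$, we also have $K(n)/n\to1/\mathbb{E}[\tau]$. This is consistent with Kac's formula $\mathbb{E}[\tau]=1/\Pr\{Y_t=1\}$, where the condition $0<\Pr\{Y_t=1\}<1$ keeps the quantity nondegenerate.

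Combining these, the $\limsup$ in the definition of $\overline A$ is a true limit, and it equals
\[
\frac{\mathbb{E}[\Delta^2\tau^2/2]}{\Delta\,\mathbb{E}[\tau]}=\Delta\frac{\mathbb{E}[\tau^2]}{2\mathbb{E}[\tau]}.
\]
In the special case where the delivery epochs are regeneration times, for instance $T_{\mathrm{mem}}=1$ so that $Z_t$ is i.i.d.\ by Remark~\ref{rem:stationary}, the same conclusion follows directly from the renewal--reward theorem.
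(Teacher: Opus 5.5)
Your argument is correct and follows the paper's skeleton: cut the age path at delivery epochs, take the sawtooth area $\tfrac{\Delta^2}{2}\tau_n^2$ as the cycle reward, and divide by $\Delta\,\mathbb{E}[\tau]$. Where you differ is the theorem used to justify the limit. The paper simply cites the renewal--reward theorem, which in its classical form needs i.i.d.\ cycles. The stated hypotheses (stationarity and ergodicity of $Z_t$) do not supply that when $T_{\mathrm{mem}}>1$ or the policy is age-dependent, because a delivery to one flow resets neither the stored inventories nor the other flows' ages. Your route uses the Palm measure at delivery epochs and Birkhoff averages of $\tau_n$ and $\tau_n^2$, hence $K(n)/n\to1/\mathbb{E}[\tau]$ in agreement with Kac's formula. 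It handles the incomplete last cycle by a first Borel--Cantelli bound, which needs no independence. This proves the identity under exactly the hypotheses as stated, at the cost of Palm machinery. The one step you leave implicit is transferring the almost-sure limits from the Palm measure back to the stationary measure. That transfer holds because the limit events are shift-invariant and the stationary law is ergodic.

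Your convention remark also matters. The paper's sketch silently integrates age continuously within slots, whereas the slot sums in \eqref{eq:fa-def} and in the objective give $\Delta\,\mathbb{E}[\tau^2]/(2\mathbb{E}[\tau])-\Delta/2$. The paper's own baseline ($\bar A=0.09$ at $s\approx0.918$, $\Delta=1$) matches this discrete value $(1-s)/s\approx0.089$, not $\tfrac12\tfrac{2-s}{s}\approx0.59$.

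One small correction to your closing aside: $T_{\mathrm{mem}}=1$ makes $\mathcal H_t^{\mathrm{multi}}$ i.i.d., but $Z_t$ also contains $a(t)$. Under FA-THR or FA-INDEX, $a(t)$ depends on flow ages, so delivery epochs are regeneration times only for age-independent policies. Your main argument does not use this.
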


\begin{proof}[Proof sketch]
Let $\{\sigma_n\}$ be delivery epochs (in continuous time) with inter-delivery lengths $\tau_n$ (in slots), i.e., $\sigma_{n+1}-\sigma_n=\Delta \tau_n$.
Between $\sigma_n$ and $\sigma_{n+1}$ the age grows linearly from $0$ to $\Delta \tau_n$ (continuous evolution within slots). The cycle reward is the time--integral of age,
\[
R_n=\int_{\sigma_n}^{\sigma_{n+1}} A_{s,d}(u)\,du
=\tfrac{1}{2}(\Delta \tau_n)^2
=\tfrac{\Delta^2}{2}\,\tau_n^2.
\]
By the renewal--reward theorem, under the stated assumptions,
\[
\bar A_{s,d}
=\lim_{t\to\infty}\frac{\sum_{n:\,\sigma_n<t} R_n}{t}
=\frac{\mathbb E[R]}{\mathbb E[\Delta \tau]}
=\Delta\,\frac{\mathbb E[\tau^2]}{2\,\mathbb E[\tau]},
\]
which yields~\eqref{eq:FA-renewal}.
\end{proof}

\subsection{Closed-form geometric case}
If the usable-slot success probability is constant, $p_{\mathrm{succ}}=\Pr\{Y_t=1\}$, then
$\tau\sim \mathrm{Geom}(p_{\mathrm{succ}}),$
and
\begin{equation}
\bar A_{s,d}=\tfrac{\Delta}{2}\,\tfrac{2-p_{\mathrm{succ}}}{p_{\mathrm{succ}}}.
\label{eq:geom-FA}
\end{equation}
This expression defines the FA--success trade-off and is used for simulation baseline validation.
The derivative $\partial\bar A_{s,d}/\partial p_{\mathrm{succ}}=-\Delta/(2p_{\mathrm{succ}}^2)$ confirms that FA decreases monotonically with $p_{\mathrm{succ}}$ but with diminishing benefit as $p_{\mathrm{succ}}\to 1$.

\subsection{Slot-level success probability}
Consider homogeneous links with per-slot external success $p_{\text{link}}$ and swap reliability $q$.
A path $\pi$ of $k$ edges is usable with probability
\begin{equation}
P_{\mathrm{use}}(\pi)=p_{\text{link}}^{k}q^{k-1}\,\mathbf1\{F_{\mathrm{end}}(\pi)\ge F_{\min}\}.
\label{eq:spath}
\end{equation}
If one purification round precedes each swap, with success map $P_{\mathrm{pur}}(F_1,F_2)$ and output fidelities $F'_\ell$, then
\begin{align}
P_{\mathrm{use}}(\pi)
=\mathbb E\!\left[
\mathbf1\{\pi\subseteq\mathcal H_t\}
\right.&\prod_{\ell\in \pi} P_{\mathrm{pur}}(F_{1,\ell},F_{2,\ell})
\,q^{k-1}\,\notag\\
&\left.\mathbf1\{F_{\mathrm{end}}(\{F'_\ell\})\ge F_{\min}\}
\right].
\label{eq:spath-pur}
\end{align}
When $\kappa$ paths $\pi_1,\dots,\pi_\kappa$ are activated such that no physical entangled pair is shared across paths (i.e., they are disjoint in the multigraph $\mathcal H_t^{\mathrm{multi}}$), independence implies
\[
p_{\mathrm{succ}}
=1-\prod_{i=1}^{\kappa}\bigl(1-P_{\mathrm{use}}(\pi_i)\bigr)
\approx \sum_{i=1}^{\kappa} P_{\mathrm{use}}(\pi_i)
\]
for small $P_{\mathrm{use}}(\pi_i)$, showing that parallelism improves the slot-level success probability in the low-success regime.

\subsection{Finite-FA conditions and stability}
Finite $\bar A_{s,d}$ requires $\mathbb E[\tau^2]<\infty$.
This holds whenever per-link and per-swap success probabilities are bounded below by $p_->0$ and $q_->0$, and the policy attempts at least one finite-length path with positive probability.
Then $\tau$ is stochastically dominated by a geometric variable, ensuring $\bar A_{s,d}<\infty$.
These conditions cover all simulation settings in Section~\ref{sec:numerical_results}.

\subsection{Control sensitivity and scheduling}
Combining~\eqref{eq:FA-renewal}--\eqref{eq:spath-pur}, any action that increases the conditional success probability
$\Pr\{Y_t=1\mid \text{state at }t,\,a(t)\}$ reduces FA.
Hence, the controller should maximize the expected slot-level success given the observed state:
\begin{equation}
a^*(t)=\arg\max_{a\in\mathcal A_t}\mathbb E\!\left[Y_t\,\middle|\,\text{state at }t,\,a\right],
\label{eq:rule}
\end{equation}
subject to the per-slot attempt and memory limits.
Equation~\eqref{eq:rule} represents a drift-minimizing heuristic analogous to Lyapunov optimization in queueing networks and motivates the FA-THR and FA-INDEX schedulers analyzed later.

\section{Numerical Results}
\label{sec:numerical_results}
Monte Carlo simulations were carried out using the probabilistic slotted model described in Section~III.
Each experiment covers $2{\times}10^5$ slots with a $2{\times}10^4$-slot warm-up and five random seeds.
The network is a $3{\times}3$ grid supporting $|P|{=}16$ source–destination pairs with an attempt budget $R{=}8$ per slot ($R/|P|{=}0.5$).
Unless otherwise stated, parameters are $\alpha{=}0.046~\mathrm{km}^{-1}$, $S{=}8$, BSM success $q{=}0.95$, $F_\mathrm{base}{=}0.82$, and $F_{\min}{=}0.75$.

\subsection{Algorithms and Benchmarks}
\label{subsec:algorithms}

We evaluate four schedulers: two conventional baselines and two FA-aware designs. 
At each slot $t$, a controller selects a subset $\mathcal S_t\!\subseteq\!\mathcal P$ of flows to serve, under the per-slot budget $|\mathcal S_t|\!\le\!R$.

\textbf{Conventional baselines.}
\emph{TP-MAX} chooses $\mathcal S_t$ maximizing expected successful entanglements 
$\sum_{(s,d)\in\mathcal S_t}\hat p_{sd}(t)$, 
where $\hat p_{sd}(t)$ is the estimated success probability.  
\emph{FID-MAX} replaces $\hat p_{sd}(t)$ with the expected end-to-end fidelity $\hat F_{sd}(t)$.
Both greedily optimize instantaneous utility and ignore flow-age evolution.

\textbf{FA-aware schedulers.}
Flow-age–aware control incorporates the penalty $A_{sd}(t)$ into the Lyapunov-drift rule
\begin{equation}
\mathcal S_t^*=\arg\max_{\mathcal S\subseteq\mathcal P,\,|\mathcal S|\le R}
\sum_{(s,d)\in\mathcal S}\!\big[U_{sd}(t)-\gamma A_{sd}(t)\big],
\label{eq:fa_rule}
\end{equation}
where $U_{sd}(t)$ is instantaneous utility and $\gamma>0$ a sensitivity weight.  
We implement two low-complexity approximations:

\emph{FA-THR}$(\tau)$ activates flows with $A_{sd}(t)>\tau$ and serves the $R$ with highest $U_{sd}(t)$.  
Small $\tau$ favors throughput (TP-MAX limit), large $\tau$ promotes fairness.

\emph{FA-INDEX} ranks flows by
\begin{equation}
I_{sd}(t)=U_{sd}(t)/(1+\beta A_{sd}(t)),
\end{equation}
serving the $R$ largest indices. 
The parameter $\beta>0$ controls the age–throughput trade-off; $\beta\!\to\!0$ yields TP-MAX, and large $\beta$ approaches equal-age service. 
Values $\beta\!\in\![0.05,0.2]$ are empirically stable.

\textbf{Complexity.}
All schedulers require sorting at most $|\mathcal P|$ entries per slot:
$\mathcal O(|\mathcal P|\log|\mathcal P|)$.  
FA-THR adds one comparison, and FA-INDEX adds a scalar weight, negligible for $|\mathcal P|\!\le\!16$.  
No global optimization or iterative search is needed.

\textbf{Interpretation.}
Both FA-THR and FA-INDEX operationalize~\eqref{eq:fa_rule} by weighting utility with age. 
FA-THR provides a coarse serve/delay decision, while FA-INDEX yields smooth prioritization. 
Together they span the fairness–efficiency envelope of the Lyapunov-optimal policy.

\subsection{Performance Metrics}
\label{subsec:metrics}

Each simulation produces per-flow age traces $A_{sd}(t)$, the number of slots since the last usable entanglement for pair $(s,d)$.
From these traces we compute the following metrics.

\textbf{Mean Age.}
\begin{equation}
\bar A_{sd}=\frac{1}{T}\sum_{t=1}^{T}A_{sd}(t),\qquad
\bar A=\frac{1}{|\mathcal P|}\!\sum_{(s,d)\in\mathcal P}\bar A_{sd}.
\label{eq:Amean}
\end{equation}

\textbf{Tail Age ($A_{95}$).}
\begin{equation}
A_{95}=\inf\{a:\Pr[A_{sd}(t)\le a]\ge0.95\}.
\label{eq:A95}
\end{equation}

\textbf{Extreme-Age Risk ($\mathrm{CVaR}_{95}$).}
\begin{equation}
\mathrm{CVaR}_{95}
=\mathbb E[A_{sd}(t)\mid A_{sd}(t)>A_{95}].
\label{eq:CVaR95}
\end{equation}

\textbf{Throughput.}
\begin{equation}
\mathrm{Thr}=\frac{1}{T}\sum_{t=1}^{T}N_{\mathrm{succ}}(t),
\end{equation}
where $N_{\mathrm{succ}}(t)$ counts usable deliveries with $F_{sd}(t)\!\ge\!F_{\min}$.

\textbf{Fairness.}
Jain’s index on inverse ages serves as a service-rate proxy:
\begin{equation}
J=
\frac{\big(\sum_{(s,d)}(1+\bar A_{sd})^{-1}\big)^2}
{|\mathcal P|\sum_{(s,d)}(1+\bar A_{sd})^{-2}}.
\label{eq:Jain}
\end{equation}

\textbf{Starvation.}
\begin{equation}
S=\frac{1}{|\mathcal P|}
\sum_{(s,d)}\mathbf1\{\bar A_{sd}>A_{\mathrm{ref}}\},
\label{eq:Starv}
\end{equation}
where $A_{\mathrm{ref}}$ is the 95th-percentile age of a single-flow baseline.

All metrics include 95\% confidence intervals from batch-mean estimates across Monte-Carlo seeds and are referenced in Figs.~\ref{fig:fa_cdf_policy}–\ref{fig:fa_frontier} and Table~\ref{tab:load_results}.

\subsection{Baseline Validation}
A single-flow configuration ($|P|{=}1$, $R{=}1$) yields a mean flow age $\bar A{=}0.09$ and throughput $s{\approx}0.918$ deliveries per slot.
The close match between the simulated $\bar A$ and the geometric prediction $\tfrac{1}{2}\tfrac{2-s}{s}$ validates both the stochastic event scheduler and the renewal approximation that underpins our analytical derivations.
This agreement confirms that deviations observed in the multi-user setting stem from scheduling dynamics rather than model artifacts.

\subsection{Load and Contention Effects}
When multiple flows contend for limited transmission attempts, TP-MAX and FID-MAX greedily maximize instantaneous utility (throughput or fidelity) at the expense of temporal balance.
This myopic prioritization allows a small subset of flows to monopolize resources, while others remain idle for extended periods.
The outcome is a heavy-tailed flow-age distribution and catastrophic latency growth ($\bar A{\approx}2.75{\times}10^4$, $A_{95}{\approx}5.5{\times}10^4$) with roughly $25\%$ starvation, as shown in Table~\ref{tab:load_results}.

FA-aware schedulers, in contrast, internalize this temporal imbalance by explicitly penalizing aged flows.
They occasionally defer high-utility transmissions to rejuvenate stale flows, thereby keeping the system near a quasi-stationary equilibrium.
This adaptive balancing substantially improves fairness and stability: the cumulative distributions in Fig.~\ref{fig:fa_cdf_policy} show that FA-aware policies dominate the tail region, eliminating the extreme delays characteristic of throughput-driven baselines.
The resulting network state exhibits near-equalized service across flows, reflected in Jain’s index $J{\approx}0.91$ and negligible starvation for FA-INDEX.

Figure~\ref{fig:fa_frontier} presents the throughput–tail-age frontier, comparing throughput against the 95\% conditional value-at-risk (CVaR) of flow age.
The behavior of FA-aware schedulers can be interpreted as a heuristic alignment with the control rule in~\eqref{eq:rule}.
Since the long-run mean flow age $\bar A_{sd}$ decreases monotonically with the usable-slot success probability $s$, 
Eq.~\eqref{eq:rule} prescribes allocating transmission attempts to actions that most increase $\mathbb{E}[s(a(t))]$ per slot.
FA-INDEX and FA-THR operationalize this idea by weighting each flow’s instantaneous utility by a decreasing function of its current age, effectively biasing service toward links whose success would yield the largest marginal reduction in expected age.
Although these policies do not explicitly minimize a Lyapunov drift, their decisions empirically follow the same stabilizing direction, keeping the network near a balanced operating point with bounded flow ages.

\begin{figure}[t]
\centering
\begin{subfigure}[t]{0.48\linewidth}
  \centering
  \includegraphics[width=\linewidth]{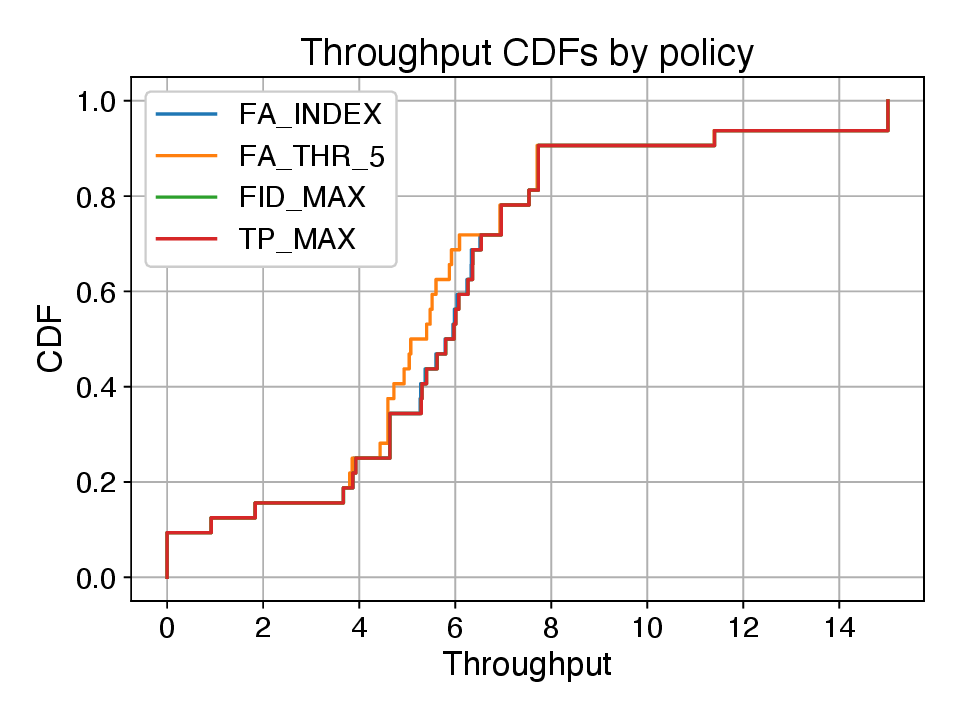}
  \caption{\footnotesize
  CDF of per-flow ages $A_{sd}(t)$ showing the empirical distribution of flow-age samples used in
  Eqs.~\eqref{eq:Amean}–\eqref{eq:CVaR95}.
  FA-aware schedulers (FA-THR, FA-INDEX) exhibit sharply bounded tails,
  whereas throughput-driven baselines (TP-MAX, FID-MAX) produce heavy-tailed delay and starvation.}
  \label{fig:fa_cdf_policy}
\end{subfigure}
\hfill
\begin{subfigure}[t]{0.48\linewidth}
  \centering
  \includegraphics[width=\linewidth]{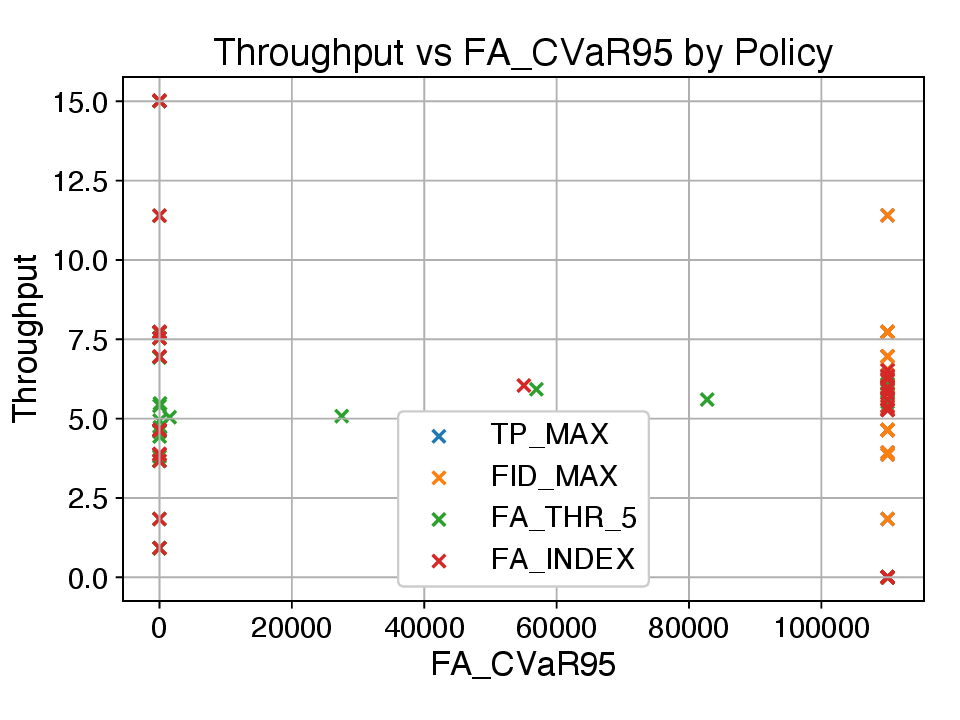}
  \caption{\footnotesize
  Scatter plot of mean throughput (Eq.~\eqref{eq:Amean}) versus the 95\% conditional value-at-risk of flow age ($\mathrm{CVaR}_{95}$, Eq.~\eqref{eq:CVaR95}).
  Each point corresponds to one scheduler.
  FA-aware policies achieve Pareto-efficient trade-offs, maintaining throughput comparable to TP-MAX while reducing extreme-age risk by two orders of magnitude.}
  \label{fig:fa_frontier}
\end{subfigure}
\vspace{1ex}
\caption{
$3{\times}3$ grid topology, $|P|{=}16$ source–destination pairs, attempt budget $R{=}8$ per slot ($R/|P|{=}0.5$),
$q{=}0.95$, $F_{\min}{=}0.75$, no purification.
Comparison of flow-age distributions and throughput–tail-age frontiers across scheduling policies.}
\label{fig:B_dual}
\end{figure}

\begin{table}[t]
\centering
\caption{Load contention on $3\times3$ grid ($|P|{=}16$, $R/|P|{=}0.5$, $F_{\min}{=}0.75$, no purification).}
\label{tab:load_results}
\renewcommand{\arraystretch}{1.15}
\setlength{\tabcolsep}{4.5pt}
\begin{tabular}{lccccc}
\toprule
\textbf{Policy} & $\bar A$ & $A_{95}$ & Throughput & Jain & Starv. \\
\midrule
TP-MAX    & $2.75{\times}10^{4}$ & $5.5{\times}10^{4}$ & $6.61$ & $0.75$ & $0.25$ \\
FID-MAX   & $2.75{\times}10^{4}$ & $5.5{\times}10^{4}$ & $6.61$ & $0.75$ & $0.25$ \\
FA-THR(5) & $0.77$ & $1.34$ & $6.60$ & $0.89$ & $0.00$ \\
FA-INDEX  & $0.67$ & $2.11$ & $6.61$ & $0.91$ & $0.01$ \\
\bottomrule
\end{tabular}
\end{table}

% \begin{figure}[t]
% \centering
% \includegraphics[width=0.9\linewidth]{figs/B/B_FA_p95.pdf}
% \caption{Flow-age percentiles across $(|P|,R/|P|)$ configurations. 
% FA-aware schedulers maintain bounded tails while throughput-oriented policies diverge.}
% \label{fig:B_fa_p95}
% \end{figure}

% \begin{figure}[t]
% \centering
% \includegraphics[width=0.9\linewidth]{figs/B/B_frontier_thr_vs_FA95.pdf}
% \caption{Throughput--FA$_{95}$ trade-off frontier. 
% FA-aware schedulers dominate the Pareto region.}
% \label{fig:B_frontier}
% \end{figure}

\begin{figure}[t]
\centering
\begin{subfigure}[t]{0.48\linewidth}
  \centering
  \includegraphics[width=\linewidth]{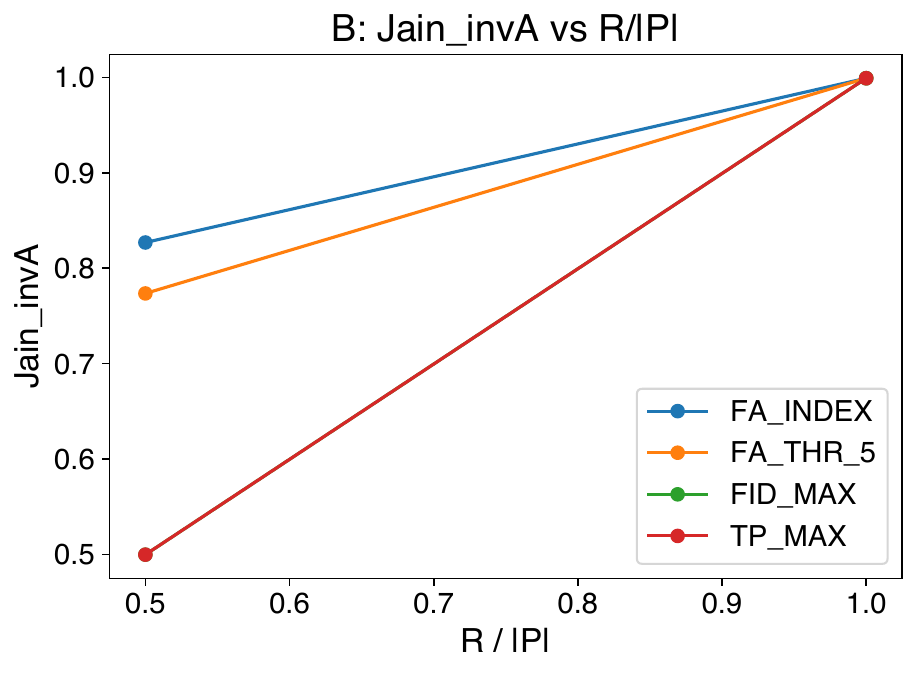}
  \caption{\footnotesize
  Fairness measured by Jain’s index $J$ (Eq.~\eqref{eq:Jain}) computed on inverse mean ages $1/(1+\bar A_{sd})$.
  FA-aware policies (FA-THR, FA-INDEX) sustain high fairness as the offered load $R/|P|$ increases, 
  whereas throughput-oriented baselines collapse under contention.}
  \label{fig:B_jain}
\end{subfigure}
\hfill
\begin{subfigure}[t]{0.48\linewidth}
  \centering
  \includegraphics[width=\linewidth]{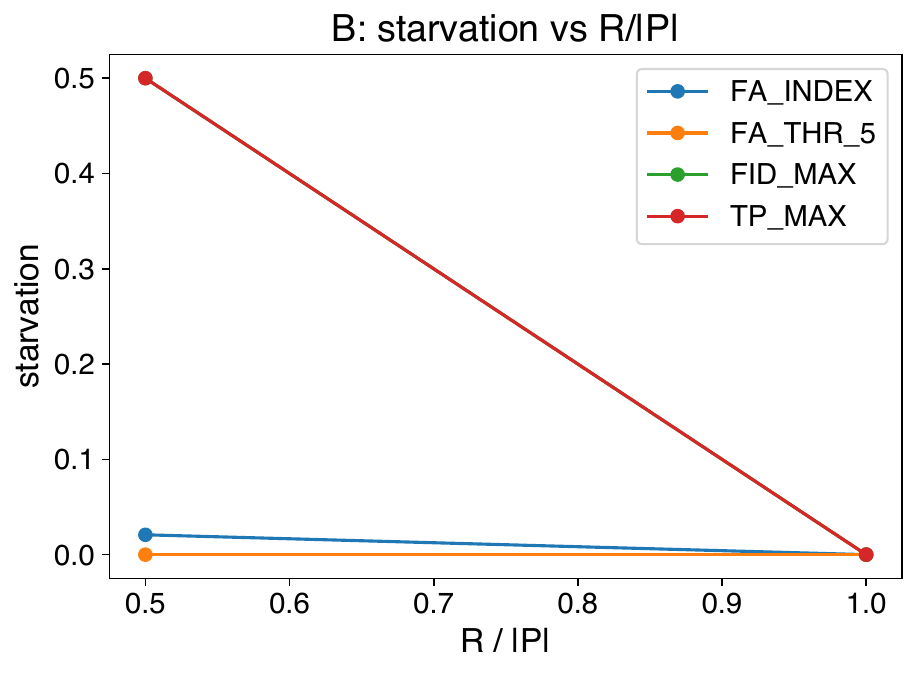}
  \caption{\footnotesize
  Starvation probability $S$ (Eq.~\eqref{eq:Starv}) versus offered-load ratio $R/|P|$, 
  where a flow is deemed starved if its mean age exceeds the single-flow 95th-percentile baseline $A_{\mathrm{ref}}$.
  FA-aware scheduling nearly eliminates starvation even under half-budget load, 
  evidencing strong temporal balancing across flows.}
  \label{fig:B_starv}
\end{subfigure}
\vspace{1ex}
\caption{
$3{\times}3$ grid topology with $|P|{=}16$ flows and attempt-budget ratios $R/|P|{\in}\{0.5,1.0\}$; 
$q{=}0.95$, $F_{\min}{=}0.75$, no purification.
Comparison of fairness and starvation behavior across scheduling policies under varying network load.}
\label{fig:B_dual_fair_starv}
\end{figure}

\subsection{Fidelity Constraints and Purification}
Introducing purification and higher fidelity thresholds changes the trade-off surface between physical reliability and temporal freshness.
As $F_{\min}$ increases, usable entanglement success probability declines, reducing the effective service rate.
Greedy policies amplify this effect because they repeatedly pursue high-fidelity paths that often fail, increasing age variance.
In contrast, FA-aware schedulers naturally adapt: when the fidelity constraint makes a link unreliable, its age term grows, prompting reallocation toward shorter or more probable paths.
Figures~\ref{fig:C_fa95}–\ref{fig:C_thr} reveal that FA-INDEX and FA-THR maintain bounded flow ages while throughput degrades smoothly.
The slope of the FA$_{95}$ curve reflects this adaptive balance—an indicator of resilience against physical-layer tightening.

\begin{figure}[t]
\centering
\begin{subfigure}[t]{0.48\linewidth}
  \centering
  \includegraphics[width=\linewidth]{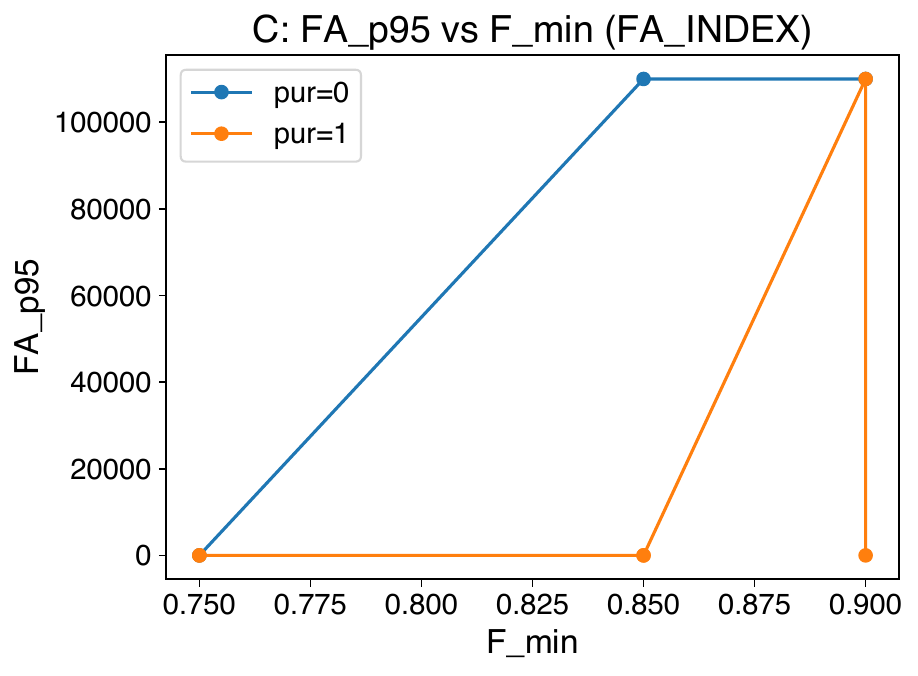}
  \caption{\footnotesize
  95th-percentile flow age $A_{95}$ (Eq.~\eqref{eq:A95}) versus minimum usable-fidelity threshold $F_{\min}$ for the FA-INDEX scheduler.
  Increasing $F_{\min}$ tightens the acceptance criterion and inflates $A_{95}$,
  reflecting longer waiting times for high-fidelity deliveries.}
  \label{fig:C_fa95}
\end{subfigure}
\hfill
\begin{subfigure}[t]{0.48\linewidth}
  \centering
  \includegraphics[width=\linewidth]{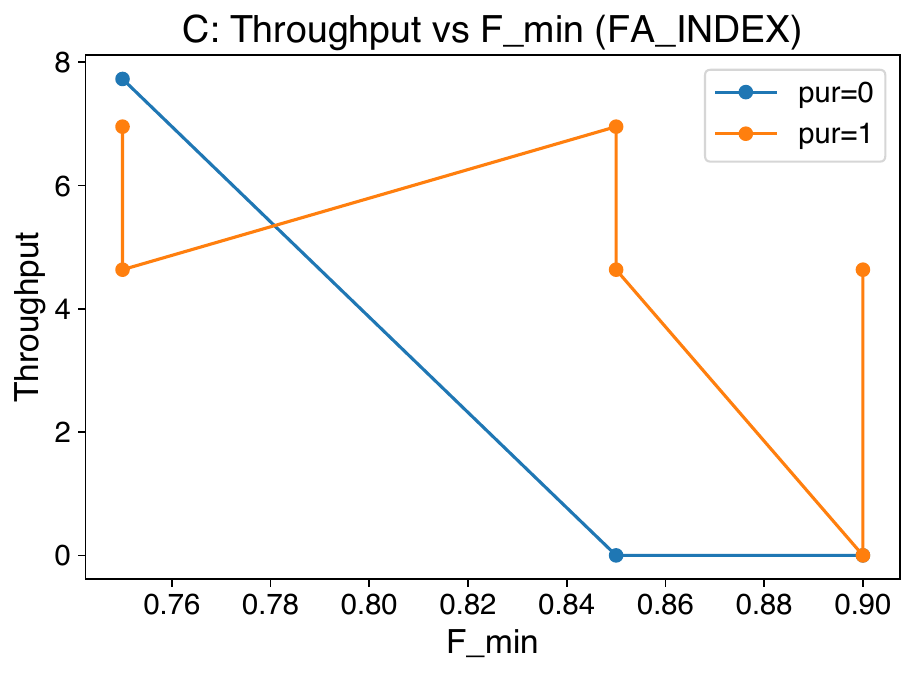}
  \caption{\footnotesize
  Mean throughput $\mathrm{Thr}$ (Eq.~\eqref{eq:Amean}) versus $F_{\min}$.
  Purification improves achievable end-to-end fidelity but lowers the effective delivery rate,
  as more attempts fail to meet stricter thresholds.}
  \label{fig:C_thr}
\end{subfigure}
\vspace{1ex}
\caption{
$3{\times}3$ grid, $|P|{=}8$, attempt-budget ratio $R/|P|{=}1$,
BSM success $q{=}0.95$, purification success $p_{\mathrm{pur}}{=}0.8$, and fidelity increment $\Delta F{=}0.08$.
Comparison of latency and throughput effects as the fidelity threshold $F_{\min}$ varies under FA-INDEX control.}
\label{fig:C_dual_fid}
\end{figure}

\subsection{Heterogeneous Path Lengths}
Path-length heterogeneity introduces asymmetric success probabilities that challenge fairness mechanisms.
Because FA-INDEX currently weighs all flows equally, its age term over-penalizes short paths and under-reacts to long, lossy ones.
The outcome (Fig.~\ref{fig:D1_p95}) is an inflated mean age for distant pairs and fairness loss ($J{\approx}0.64$, Fig.~\ref{fig:D1_jain}).
FA-THR(5), which uses a simpler age-threshold trigger, performs better because it implicitly normalizes by success rate—flows with low link success accumulate age quickly and trigger service more often.
These results highlight that FA-INDEX should incorporate per-path success expectations (e.g., $p_e$ or end-to-end $s_k$) to remain optimal in heterogeneous networks.

\begin{figure}[t]
\centering
\begin{subfigure}[t]{0.48\linewidth}
  \centering
  \includegraphics[width=\linewidth]{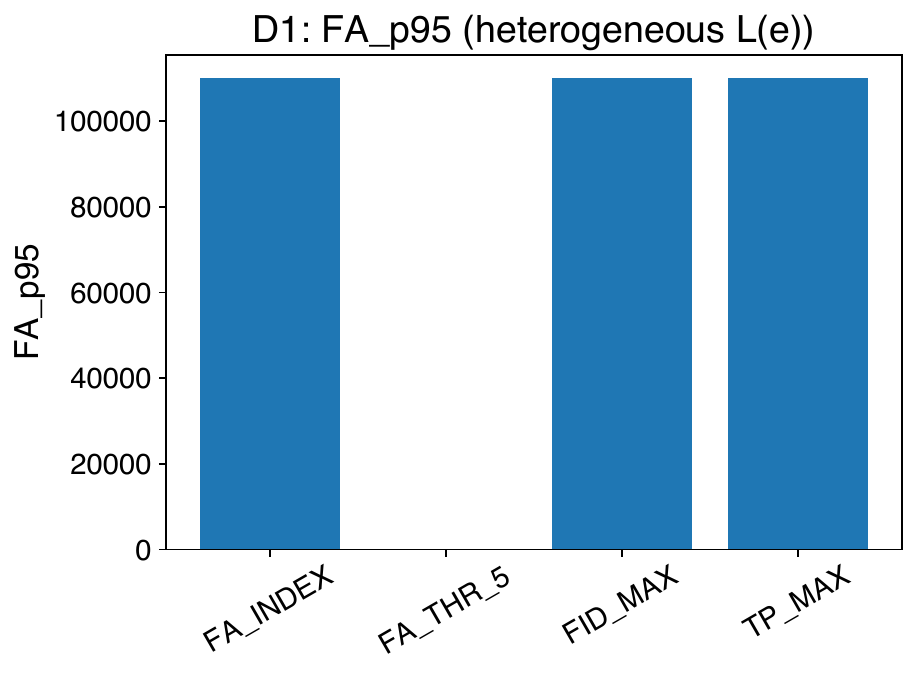}
  \caption{\footnotesize
  95th-percentile flow age $A_{95}$ (Eq.~\eqref{eq:A95}) per policy.
  FA-THR maintains bounded delay across heterogeneous routes,
  whereas FA-INDEX shows higher sensitivity to path length due to stronger age weighting.}
  \label{fig:D1_p95}
\end{subfigure}
\hfill
\begin{subfigure}[t]{0.48\linewidth}
  \centering
  \includegraphics[width=\linewidth]{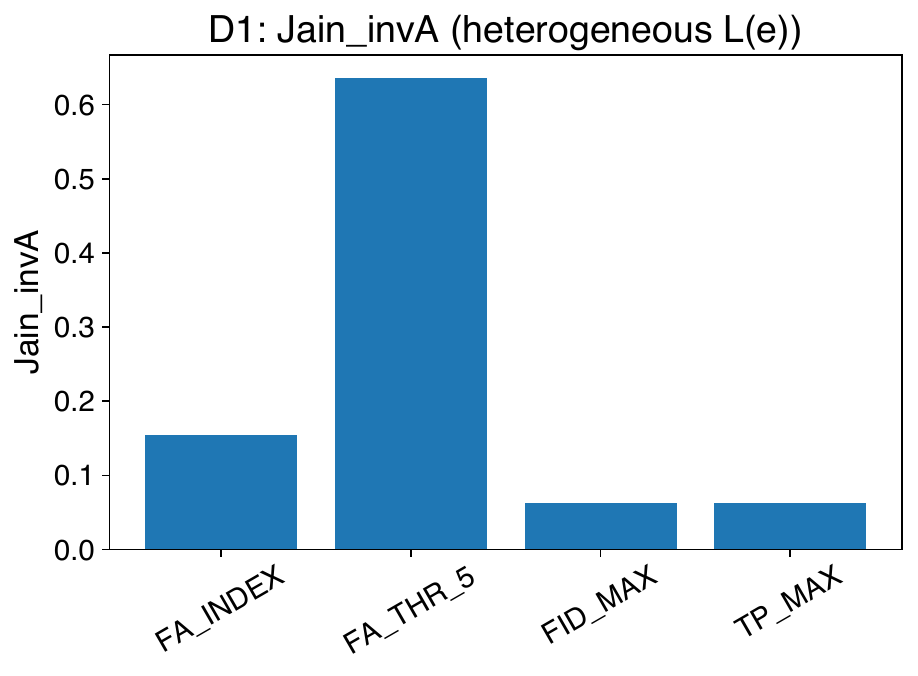}
  \caption{\footnotesize
  Fairness measured by Jain’s index $J$ (Eq.~\eqref{eq:Jain}) computed over inverse mean ages $1/(1+\bar A_{sd})$.
  FA-THR provides the most uniform service across flows,
  while FA-INDEX slightly sacrifices fairness for lower tail age.}
  \label{fig:D1_jain}
\end{subfigure}
\vspace{1ex}
\caption{
$3{\times}3$ grid topology with random link lengths $L(e){\in}[10,80]$ km, producing path lengths $k{\in}\{2,3,4,6\}$;
$|P|{=}8$, $R/|P|{=}0.5$, BSM success $q{=}0.9$, no purification.
Comparison of latency and fairness across policies under heterogeneous end-to-end distances.}
\label{fig:D1_dual}
\end{figure}

\subsection{Finite-Coherence Regime}
As memory coherence degrades (smaller $T_2$ or larger waiting depth $T$), usable fidelities decay exponentially.
This introduces time coupling: delayed deliveries not only increase age but also reduce future success probabilities.
Greedy schedulers ignore this, causing cascading stalls and rapidly increasing $\bar A$ and $A_{95}$.
FA-aware schedulers mitigate this feedback by accelerating service of older flows before fidelity loss becomes severe.
Figure~\ref{fig:D2_fa95} shows that FA-THR maintains bounded ages even when coherence time drops below $100$~ms, while FA-INDEX—tuned for memoryless links—degrades gradually.
Throughput trends in Fig.~\ref{fig:D2_thr} confirm that FA-awareness preserves operational efficiency by anticipating decoherence-induced service decay.

\begin{figure}[t]
\centering
\begin{subfigure}[t]{0.48\linewidth}
  \centering
  \includegraphics[width=\linewidth]{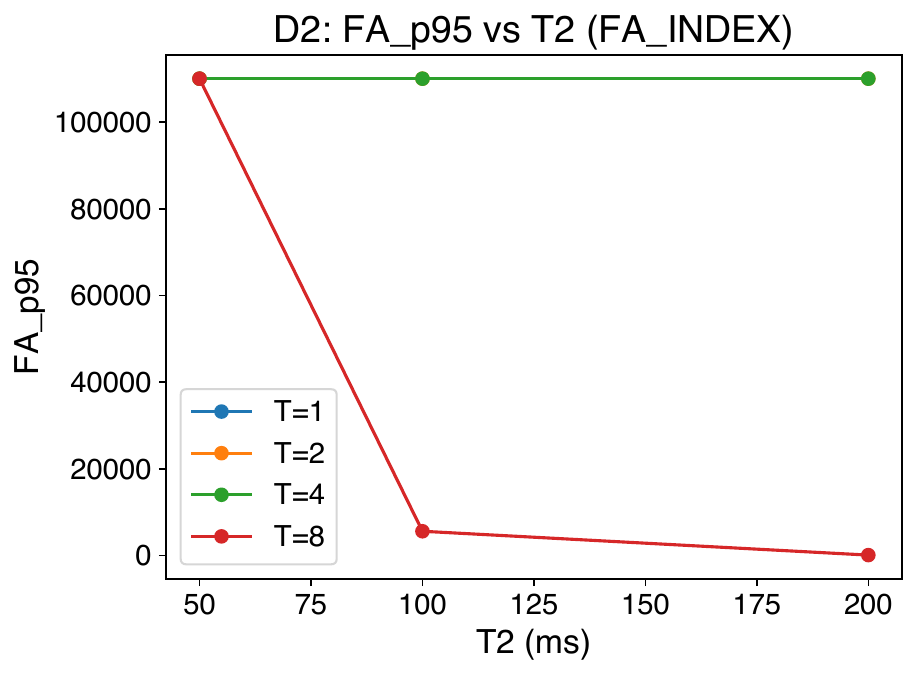}
  \caption{\footnotesize
  95th-percentile flow age $A_{95}$ (Eq.~\eqref{eq:A95}) versus coherence time $T_2$ for the FA-INDEX policy.
  Shorter coherence periods cause rapid latency inflation, especially on long paths, as stored qubits decohere before successful end-to-end completion.}
  \label{fig:D2_fa95}
\end{subfigure}
\hfill
\begin{subfigure}[t]{0.48\linewidth}
  \centering
  \includegraphics[width=\linewidth]{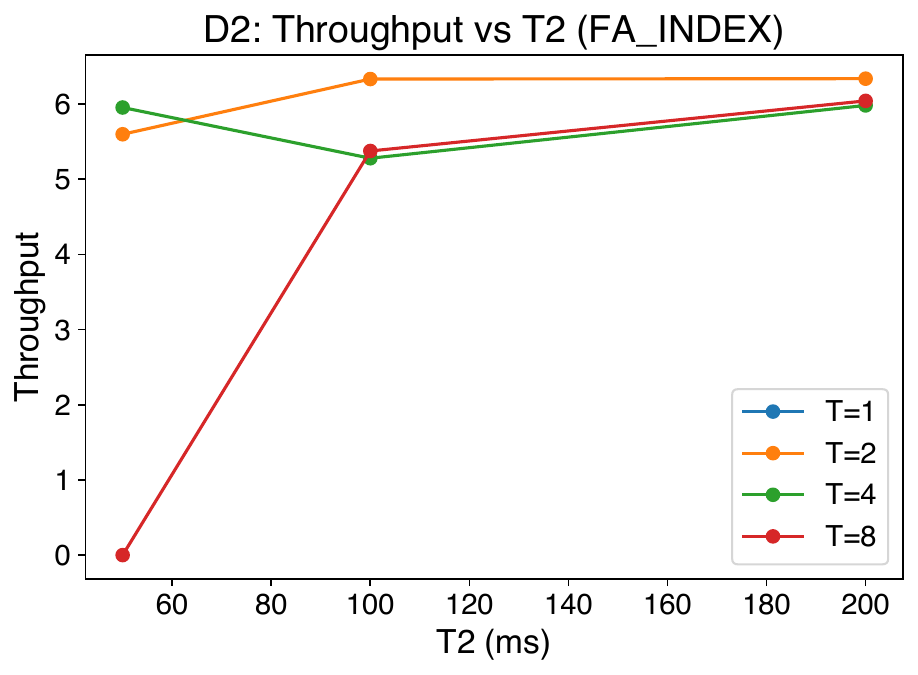}
  \caption{\footnotesize
  Mean throughput $\mathrm{Thr}$ (Eq.~\eqref{eq:Amean}) versus coherence time $T_2$.
  All schedulers suffer rate loss from decoherence,
  but FA-aware control maintains higher throughput by refreshing stale links and prioritizing high-success paths within the coherence window.}
  \label{fig:D2_thr}
\end{subfigure}
\vspace{1ex}
\caption{
$3{\times}3$ grid topology, $|P|{=}8$, attempt-budget ratio $R/|P|{=}1$, 
BSM success $q{=}0.9$, no purification, memory delay $T{\in}\{2,4,8\}$ slots, 
and coherence times $T_{2}{\in}\{50,100,200\}$ ms.
Comparison of latency and throughput degradation under finite-memory effects for FA-INDEX and other schedulers.}
\label{fig:D2_dual}
\end{figure}

\subsection{Pareto Frontier Analysis}
The composite frontier in Fig.~\ref{fig:E_frontier} summarizes the overall trade-off between efficiency (throughput) and temporal freshness (FA$_{95}$).
Points corresponding to FA-THR and FA-INDEX dominate the Pareto boundary across all load regimes, whereas TP-MAX and FID-MAX cluster far from the efficient region.
This demonstrates that enforcing temporal balance via flow-age control yields globally superior operating points—no policy can jointly improve both throughput and latency without incorporating some form of FA-awareness.

\begin{figure}[t]
\centering
\includegraphics[width=0.58\linewidth]{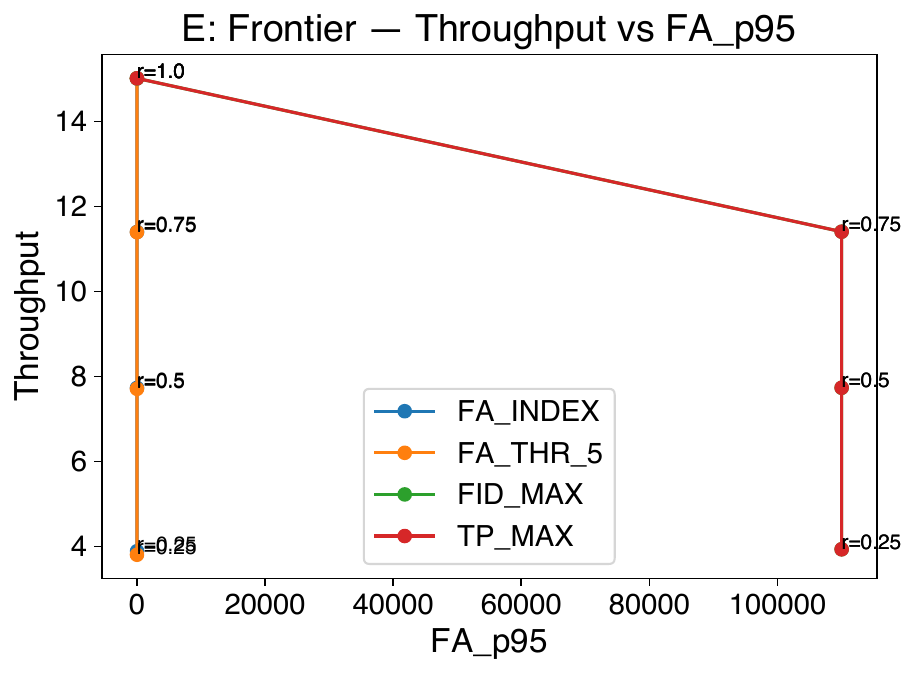}
\caption{
Each point represents the mean throughput $\mathrm{Thr}$ (Eq.~\eqref{eq:Amean}) and the corresponding 95th-percentile flow age $A_{95}$ (Eq.~\eqref{eq:A95}) averaged over 20 Monte Carlo seeds.
Pareto frontiers illustrate the trade-off between delivery rate and latency tail.
FA-aware schedulers trace the Pareto-efficient frontier, maintaining high throughput with markedly smaller tail ages than throughput-driven baselines.}

\label{fig:E_frontier}
\end{figure}

\subsection{Discussion and Insights}

The results show that flow-age weighting serves as a distributed regulator aligning short-term scheduling with long-term temporal stability. 
Purification and fidelity constraints interact multiplicatively with age, requiring joint adaptation of fidelity targets and age weights. 
In heterogeneous or decoherence-limited regimes, FA-INDEX should normalize by estimated success probabilities to preserve fairness. 
Overall, the FA-aware policies behave as dynamic backlog-balancing controllers that approximately minimize age drift, linking physical-layer reliability with network-layer fairness and establishing age-driven scheduling as a unifying principle for stable quantum networks.

\section{Conclusions}

We introduced the \emph{Fidelity-Age (FA)} metric as a joint measure of timeliness and fidelity for quantum repeater networks, capturing the time elapsed since the most recent delivery whose end-to-end fidelity exceeds a prescribed threshold. By modeling usable entanglement deliveries as renewal events, we established an explicit analytical relationship between slot-level success probabilities and long-run average FA, enabling principled performance evaluation under stochastic generation, swapping, and decoherence.

Formulating FA minimization under finite memory, attempt budgets, and fidelity constraints led to two lightweight scheduling policies, \emph{FA-THR} and \emph{FA-INDEX}, which approximate Lyapunov-drift–optimal control while remaining implementable with local state information. These policies directly target freshness-conditioned usability rather than instantaneous throughput alone.

Simulation results demonstrate that FA-aware scheduling preserves throughput while substantially improving fairness and stability, suppressing extreme-age tails even in heterogeneous networks with finite coherence times. More broadly, explicit age-based weighting provides a systematic mechanism for balancing short-term transmission opportunities against long-term delivery regularity, suggesting FA as a unifying objective for freshness-aware control in quantum networks.

%Future work will extend FA control to multi-hop chains with dynamic memory, reinforcement-learning integration, and continuous-time decoherence models.

\balance

\bibliographystyle{IEEEtran}
\bibliography{ref}

\end{document}